%
%
%
%

\documentclass{acm_proc_article-sp}

\newcommand{\punt}[1]{}
\newcommand{\cmnt}[1]{}













\newtheorem{theorem}{Theorem}

\newtheorem{lemma}[theorem]{Lemma}
\newtheorem{corollary}[theorem]{Corollary}

\newtheorem{property}[theorem]{Property}
\newtheorem{definition}[theorem]{Definition}

\newcommand{\secref}[1]{Section~\ref{sec:#1}}

\newcommand{\lemref}[1]{Lemma~\ref{lem:#1}}

\newcommand{\eqnref}[1]{Eqn(\ref{eq:#1})}

\newcommand{\propref}[1]{Property~\ref{prop:#1}}






\newcommand{\id}[1]{\mbox{\textit{#1}}}

\newcommand{\op} {operation}
\newcommand{\termop} {terminal operation}
\newcommand{\memop} {memory operation}
\newcommand{\tobj} {t-object}
\newcommand{\evts}[1] {evts(#1)}

\newcommand{\comm}{\textit{committed}}
\newcommand{\aborted}{\textit{aborted}}
\newcommand{\txns}{\textit{txns}}

\newcommand{\begtrans} {begin\_tran}
\newcommand{\tryc} {tryC}
\newcommand{\trya}{\textit{tryA}}

\newcommand{\Wset}{\textit{Wset}}

\newcommand{\valid} {valid}

\newcommand{\legal} {legal}

\newcommand{\lastw} {lastWrite}
\newcommand{\lwrite}[2] {#2.lastWrite(#1)}

\newcommand{\co} {CO}

\newcommand{\opty} {opacity}
\newcommand{\coop} {co-opaque}
\newcommand{\coopty} {co-opacity}
\newcommand{\cg}[1] {CG(#1)}

\newcommand{\sfm} {sfm}



\usepackage{graphicx}
\usepackage{algorithm}
\usepackage[noend]{algpseudocode}

\begin{document}

\title{Opacity Proof for CaPR+ Algorithm}

%
\numberofauthors{3}
\author{
\alignauthor
Anshu S Anand \\
       \affaddr{Homi Bhabha National Institute, Mumbai}\\
       \email{\texttt{anshusanand2001@gmail.com}}
\alignauthor
R K Shyamasundar\\
       \affaddr{Indian Institute of Technology}\\
       \affaddr{Mumbai}\\
       \email{\texttt{shyamasundar@gmail.com}}
\alignauthor
Sathya Peri\\
       \affaddr{Indian Institute of Technology}\\
       \affaddr{Hyderabad}\\
       \email{\texttt{sathya\_p@iith.ac.in}}
}

\maketitle
\begin{abstract}
In this paper, we describe an enhanced Automatic Checkpointing and Partial Rollback algorithm($CaPR^{+}$) to realize 
Software Transactional Memory(STM) that is based on
continuous conflict detection, lazy versioning with automatic checkpointing, and partial rollback. 
Further, we provide a proof of correctness of $CaPR^{+}$ algorithm, in particular, Opacity, a STM
correctness criterion, that precisely
captures the intuitive correctness guarantees required of transactional 
memories. The algorithm provides a natural way to realize a hybrid system of pure aborts and partial rollbacks.
We have also implemented the algorithm,
and shown its effectiveness with reference to the Red-black tree micro-benchmark and STAMP benchmarks.  
The results obtained demonstrate the effectiveness of the Partial Rollback mechanism over pure abort 
mechanisms, particularly in applications consisting of large transaction lengths.
\end{abstract}



\keywords{STM, transaction, opacity, correctness, multi-core} 

\section{Introduction}
The challenges posed by the use of low-level synchronization primitives like 
locks led to the search of alternative parallel programming
 models to make the process of writing concurrent programs easier. Transactional 
Memory is a promising programming memory in this regard.
 
A Software Transactional Memory(STM)\cite{STM} is a concurrency control mechanism that resolves data conflicts 
in software as compared to in hardware by HTMs. 

STM provides the programmers with high-level
constructs to delimit transactional operations and with these constructs in 
hand, the
programmer just has to demarcate atomic blocks of code, that identify critical
regions that should appear to execute atomically and in isolation from other 
threads.
The underlying transactional memory implementation then implicitly takes care of
the correctness of concurrent accesses to the shared data. The STM might 
internally use fine-grained locking, or some non-blocking mechanism, but this is hidden 
from the programmer and the application thereby relieving him of the burden of handling concurrency issues.

Several STM implementations have been proposed, which are mainly classified
based on the following metrics:\newline
1) shared object update(version management) - decides when does a transaction 
update its shared objects during its lifetime. \newline
2) conflict detection - decides when does a transaction detect a conflict with 
other transactions in the system. \newline
3) concurrency control - determines the order in which the events - conflict, 
its detection and resolution occur in the system.

Each software transaction can perform operations on shared data, and then either
commit or abort. When the transaction commits, the effects of all its operations
become immediately visible to other transactions; when it aborts, all its 
operations are rolled back and none of its effects are visible to other transactions. Thus, abort is an important STM mechanism that 
allows the transactions to be atomic.  However, abort comes at a cost, as an abort operation implies additional overhead as the transaction 
is required to be re-executed after canceling the effects of the local transactional operations. Several solutions have been proposed
for this, that are based on partial rollback, where the transaction rolls back to an intermediate consistent state rather than restarting from beginning. 
\cite{checkpt} was the first work that illustrated the use of checkpoints in boosted transactions and \cite{wali} suggested using checkpoints in HTMs.
In \cite{lupei} the partial rollback operation is based only on shared data that does not support local data which requires extra effort from 
the programmer in ensuring consistency. \cite{monika1} and \cite{monika2} is an STM algorithm that supports both shared and local data for 
partial rollback. \cite{alice} is another STM that supports both shared and local data. 
Our work is based on \cite{monika1}. We present an improved and simplified algorithm, Automatic Checkpointing and Partial Rollback algorithm($CaPR^{+}$)
and prove its correctness.

Several correctness criteria exist for STMs like linearizability, 
serializability, rigorous scheduling,
 etc. However, none of these criteria is sufficient to describe the semantics of 
TM with its subtleties. Opacity is a criterion that
captures precisely the correctness requirements that have been intuitively 
described by many TM designers. We discuss Opacity in section~\ref{sec:model} and present the proof of opacity of $CaPR^{+}$ algorithm
in section~\ref{sec:proof}.

\section{System Model}
\label{sec:model}

The notations defined in this section have been inspired from 
\cite{KuzPer:CORR:NI:2012}. We assume a system
of $n$ processes (or threads), $p_1,\ldots,p_n$ that access a collection of 
\emph{objects} via atomic \emph{transactions}.
The processes are provided with the following \emph{transactional operations}: 
\textit{\begtrans}$()$ \op, which 
invokes a new transaction and returns the $id$ of the new transaction; the 
\textit{write}$(x,v,i)$ operation that 
updates object $x$ with value $v$ for a transaction $i$, the \textit{read}$(x)$ 
operation that returns a value read in
$x$, \textit{tryC}$()$ that tries to commit the transaction and
returns \textit{commit} ($c$ for short) or \textit{abort} ($a$ for
short), and \textit{\trya}$()$ that aborts the transaction and returns
$A$. The objects accessed by the read and write \op{s} are called as
\tobj{s}. For the sake of presentation simplicity, we assume that the
values written by all the transactions are unique. 

Operations \textit{write}, \textit{read} and \textit{\tryc} may
return $a$, in which case we say that the operations \emph{forcefully
abort}. Otherwise, we say that the operation has \emph{successfully}
executed.  Each operation is equipped with a unique transaction
identifier. A transaction $T_i$ starts with the first operation and
completes when any of its operations returns $a$ or $c$. 
Abort and commit \op{s} are called \emph{\termop{s}}. 
For a transaction $T_k$, we denote all its  read \op{s} as $Rset(T_k)$
and write \op{s} $Wset(T_k)$. Collectively, we denote all the \op{s}
of a  transaction $T_i$ as $\evts{T_k}$. 

\vspace{1mm}
\noindent
\textit{Histories.} 
A \emph{history} is a sequence of \emph{events}, i.e., a sequence of
invocations and responses of transactional operations. The collection
of events is denoted as $\evts{H}$. For simplicity, we only consider
\emph{sequential} histories here: the invocation of each transactional
operation is immediately followed by a matching response. Therefore,
we treat each transactional operation as one atomic event, and let
$<_H$ denote the total order on the transactional operations incurred
by $H$. With this assumption the only relevant events of a transaction
$T_k$ are of the types: $r_k(x,v)$, $r_k(x,A)$, $w_k(x, v)$, $w_k(x,
v,A)$, $\tryc_k(C)$ (or $c_k$ for short), $\tryc_k(A)$, $\trya_k(A)$ (or $a_k$ 
for short). 
We identify a history
$H$ as tuple $\langle \evts{H},<_H \rangle$. 

Let $H|T$ denote the history consisting of events of $T$ in $H$, 
and $H|p_i$ denote the history consisting of events of $p_i$ in $H$. 
We only consider \emph{well-formed} histories here, i.e.,
(1) each $H|T$ consists of  a read-only prefix (consisting of read
operations only), followed by a
write-only part (consisting of write operations only), possibly \emph{completed}
with a $\tryc$ or $\trya$ operation\footnote{This restriction brings no loss of 
generality~\cite{KR:2011:OPODIS}.},  and
(2) each $H|p_i$ consists of a sequence of transactions, where no new
transaction begins before the last transaction
completes (commits or a aborts). 

We assume that every history has an initial committed transaction $T_0$
that initializes all the data-objects with 0. The set of transactions
that appear in $H$ is denoted by $\txns(H)$. The set of committed
(resp., aborted) transactions in $H$ is denoted by $\comm(H)$
(resp., $\aborted(H)$). The set of \emph{incomplete} or \emph{live} transactions
in $H$ is denoted by $\id{incomplete}(H)$
($\id{incomplete}(H)=\txns(H)-\comm(H)-\aborted(H)$). 

For a history $H$, we construct the \emph{completion} of $H$, denoted 
$\overline{H}$, 
by inserting $a_k$ immediately after the last event 
of every transaction $T_k\in\id{incomplete}(H)$.

\vspace{1mm}
\noindent
\textit{Transaction orders.} For two transactions $T_k,T_m \in \txns(H)$, we say 
that  $T_k$ \emph{precedes} $T_m$ in the \emph{real-time order} of $H$, denote 
$T_k\prec_H^{RT} T_m$, if $T_k$ is complete in $H$ and the last event of $T_k$ 
precedes the first event of $T_m$ in $H$. If neither $T_k\prec_H^{RT} T_m$ nor 
$T_m \prec_H^{RT} T_k$, then $T_k$ and $T_m$ \emph{overlap} in $H$. A history 
$H$ is \emph{t-sequential} if there are no overlapping transactions in $H$, 
i.e., every two transactions are related by the real-time order.

For two transactions $T_k$ and $T_m$ in $\txns(H)$, we say that \emph{$T_k$ 
precedes $T_m$ in conflict order}, denoted $T_k \prec_H^{CO} T_m$ if: (a) (w-w 
order) $c_k <_H c_m$ and $Wset(T_k) \cap Wset(T_m) \neq \emptyset$; (b) (w-r 
order) $c_k <_H r_m(x,v)$, $x \in Wset(T_k)$ and $v \neq A$; (c) (r-w order) 
$r_k(x,v)<_H c_m$ and $x\in Wset(T_m)$ and $v \neq A$. Thus, it can be seen that 
the conflict order is defined only on \op{s} that have successfully executed.

\cmnt {
\vspace{1mm}
\noindent
\textit{Sub-histories.} A \textit{sub-history}, $SH$ of a history
$H$ denoted as the tuple $\langle \evts{SH},$ $<_{SH}\rangle$ and is
defined as: (1) $<_{SH} \subseteq <_{H}$; (2) $\evts{SH} \subseteq
\evts{H}$; (3) If an event of a transaction $T_k \in \txns(H)$ is in $SH$ then 
all
the events of $T_k$ in $H$ should also be in $SH$. 
For a history
$H$, let $R$ be a subset of $txns(H)$, the transactions in
$H$. 
Then $\shist{R}{H}$ denotes  the \ssch{} of $H$ that is
formed  from the \op{s} in $R$. 
}

\vspace{1mm}
\noindent
\textit{Valid and legal histories.} 
Let $H$ be a history and $r_k(x, v)$ be a read {\op} in $H$. A
successful read $r_k(x, v)$ (i.e $v \neq A$), is said to be
\emph{\valid} if there is a transaction $T_j$ in $H$ that commits
before $r_K$ and $w_j(x, v)$ is in $\evts{T_j}$. Formally, $\langle
r_k(x, v)$  is \valid{} 
$\Rightarrow \exists T_j: (c_j <_{H} r_k(x, v)) \land (w_j(x, v) \in
\evts{T_j}) \land (v \neq A) \rangle$.  The history $H$ is \valid{} 
if all its successful read \op{s} are \valid. 

We define $r_k(x, v)$'s \textit{\lastw{}} as the latest commit event
$c_i$ such that $c_i$ precedes $r_k(x, v)$ in $H$ and $x\in\Wset(T_i)$
($T_i$ can also be $T_0$). A successful read \op{} $r_k(x, v)$ (i.e
$v \neq A$), is said to be \emph{\legal{}} if transaction $T_i$
(which contains  $r_k$'s \lastw{}) also writes $v$ onto $x$. Formally,
$\langle r_k(x, v)$ \textit{is \legal{}} $\Rightarrow (v \neq A) \land
(\lwrite{r_k(x, v)}{H} = c_i) \land (w_i(x,v) \in \evts{T_i})
\rangle$.  The history $H$ is \legal{} if all its successful read
\op{s} are \legal. Thus from the definitions we get that if $H$ is \legal{} then 
it is also \valid.


\vspace{1mm}
\noindent
\textit{Opacity.} 
We say that two histories $H$ and $H'$ are \emph{equivalent} if
they have the same set of events.  
Now a history $H$ is said to be \textit{opaque}
\cite{GuerKap:2008:PPoPP,tm-book} 
if $H$ is \valid{}  and there exists a
t-sequential legal history $S$ such that (1) $S$ is equivalent to
$\overline{H}$ and (2) $S$ respects $\prec_{H}^{RT}$, i.e
$\prec_{H}^{RT} \subset \prec_{S}^{RT}$. 
By requiring $S$ being equivalent to $\overline{H}$, opacity treats
all the incomplete transactions as aborted. 

\cmnt {
Along the same lines, a \valid{} history $H$ is said to be
\textit{strictly serializable} if $\shist{\comm(H)}{H}$ is opaque.
Thus, unlike opacity, strict serializability does not include aborted
transactions in the global serialization order.
}

\vspace{1mm}
\noindent
\textit{Implementations and Linearizations.} A (STM) implementation is
typically a library of functions for implementing: $read_k$,
$write_k$, $\tryc_k$ and $\trya_k$ for a transaction $T_k$.  
\cmnt{Consider an implementation $M$. We denote the set of all histories 
generated by $M$ as $\gen{M}$. }We say that an implementation $M_p$ is correct 
w.r.t to a property $P$ if all the histories generated by $M_p$ are in 
$P$\cmnt{, i.e. $\gen{M_p} \in P$}. 
The histories generated by an STM implementations are normally not
sequential, i.e., they may have overlapping transactional \op{s}. 
Since our correctness definitions are proposed for sequential histories,
to reason about correctness of an implementation, we order the events in
a non-concurrent history in a sequential manner. The ordering must
respect the real-time ordering of the \op{s} in the original history. 
In other words, if the response \op{} $o_i$ occurs before the
invocation \op{} $o_j$ in the original history then $o_i$ occurs
before $o_j$ in the sequential history as well. 
Overlapping events, i.e. events whose invocation and response events
do not occur either before or after each other, can be ordered in any way. 

We call such an ordering as \emph{linearization} \cite{HerlWing:1990:TPLS}. Now 
for a (non-sequential) history $H$ generated by an implementation $M$, multiple 
such linearizations are possible. An implementation $M$ is considered 
\emph{correct} (for a given correctness property $P$)  if every its history has 
a correct linearization (we say that this linearization is exported by $M$).  

We assume that the implementation has enough information to generate an unique 
linearization for H to reason about its correctness. For instance, 
implementations that use locks for executing conflicting transactional 
operations, the order of access to locks by these (overlapping) operations can 
decide the order in obtaining the sequential history. This is true with STM 
systems such as \cite{ImbsRay:2008:OPODIS, CIR11, AttHill:TCS:SingMult:2012} 
which use locks.

 \section{CaPR+ Algorithm} \label{capra}

 In this section, we present the data structures and the $CaPR^{+}$ Algorithm. 
The various data structures used in the $CaPR^{+}$ Algorithm are categorized into local workspace and global workspace,
depending on whether the data structure is visible to the local transaction or every transaction. 
The data structures used in the local workspace are as follows: 

\begin{enumerate}
\item Local Data Block(LDB) - Each entry consists of the 
local object and its current value in the transaction(Table 1).
\item Shared object Store(SOS) - An entry in Table 2 stores the address of the shared object, its value, a read flag and a write flag.
Both read and write flags have 0 as the initial value. Value 1 in read/write flag indicates the object has been read/written 
by the transaction, respectively. 
%
\item Checkpoint Log(Cplog) - Used to partially rollback a transaction as shown in Table 3, where each entry stores, 
a) the shared object whose read initiated the log entry (this entry is made every 
time a shared object is read for the first
time by the transaction), b) program location from where the transaction 
should proceed after a rollback, and c) the current snapshot of the transaction's local data block and the shared object 
store.
\end{enumerate}

\begin{table}[h]
\centering
\caption{Local Data Block}
\begin{tabular}{|l|c|}
\hline
Object & Value \\ \hline
       &       \\ \hline
\end{tabular}
\end{table}

\begin{table}[h]
\centering
\caption{Shared object Store}
\begin{tabular}{|l|l|l|l|}
\hline
Object &  Current Value & Read flag & Write flag\\ \hline
         &         &   &    \\ \hline
\end{tabular}
\end{table}

\begin{table}[h]
\centering
\caption{Checkpoint Log}
\begin{tabular}{|l|c|l|}
\hline
\begin{tabular}[c]{@{}l@{}}Victim Shared\\  object\end{tabular} & Program Location      & Local Snapshot \\ \hline
               &                  &              \\ \hline
\end{tabular}
\end{table}

 \begin{table}[h]
 \centering
 \caption{Global List of Active Transactions}
\begin{tabular}{|l|c|l|}
\hline
\begin{tabular}[c]{@{}l@{}}Transaction\\ ID\end{tabular} & 
\begin{tabular}[c]{@{}c@{}}Status \\ Flag\end{tabular} & Conflict Objects \\ \hline
                &                        &              \\ \hline
\end{tabular}
\end{table}

\begin{table}[h]
\centering
\caption{Shared Memory}
\begin{tabular}{|l|c|l|}
\hline
\begin{tabular}[c]{@{}c@{}}Shared\\ object\end{tabular} & 
\begin{tabular}[c]{@{}l@{}}Value\end{tabular} & \begin{tabular}[c]{@{}l@{}}List 
of\\ active readers\end{tabular}\\ \hline
                 &            &            \\ \hline
\end{tabular}
\end{table}

The data structures in the global workspace are: 
\begin{enumerate}
\item Global List of Active Transactions(Actrans) - Each entry in this list contains a) a 
unique transaction identifier, b) a
status flag that indicates the status of the transaction, as to whether the 
transaction is in conflict with any of the
committed transactions, and c) a list of all the objects in conflict with the 
transaction. This list is updated by the
committed transactions.
\item Shared Memory(SM) - Each entry in the shared memory stores a) a shared object, 
b) its value, and c) an active readers
list that stores the transaction IDs of all the transactions reading the shared 
object.
\end{enumerate}

The $CaPR^{+}$ algorithm is shown in Algorithm 1.

\begin{algorithm}
\caption{CaPR Algorithm}
\begin{algorithmic}[1]
\Procedure{ReadTx}{$t, o, pc$}
\If{ o is in t's local data block}
\State $str.val \leftarrow \textit{o.val from LDB}$
\State $\textit{return l}  \leftarrow 1(Success);$
\ElsIf{o is in t's shared object store}
\State $str.val \leftarrow \textit{o.val from SOS}$
\State $\textit{return l} \leftarrow 1(Success);$
\ElsIf{o is in shared memory} 
\State $\textit{obtain locks on object o, \& the entry for transaction t;}$
\If{t.status\_flag = RED}
\State $\textit{Unlock lock on t and o}$
\State $PL = partially\_Rollback(t);$
\State $\textit{update str.PL = PL}$
\State $\textit{return l} \leftarrow 0(Rollback);$
\EndIf
\State $\textit{create checkpoint entry in checkpoint log for o;}$
\State $str.val \leftarrow \textit{o.val from Shared Memory}$
\State $\textit{add t to o's readers' list}$
\State $\textit{add o into SOS and set its read flag to 1;}$
\State $\textit{release locks on o and t;}$
\State $\textit{return l} \leftarrow 1(Success);$
\Else \Comment{o not in shared memory}
\State$\textit{return l} \leftarrow 2(Error);$
 \EndIf
\EndProcedure

\Procedure{WriteTx}{$o, t$}
\If{o is a local object}
\State $\textit{update o in local data block;}$
\ElsIf{o is a shared object}
\If{o is in shared object store}
\State $\textit{update o in SOS and set its write flag to 1;}$
\Else
\State $\textit{insert o in SOS and set its write flag to 1;}$
 \EndIf
  \EndIf
\EndProcedure

\Procedure{commitTx}{$t$}
\State $\textit{Assign t's write-set, }t.WS = \{o| \textit{o is in SOS and o's write flag} = 1\}$
\State $\textit{Sort all objects in t.WS, and obtain locks on them;}$
\State $\textit{Initialize A = \{t\}}$;
\State $\textit{\textbf{for each} object o in the t.WS}$
\State $\textit{\hspace{0.5cm}A = A} \cup  {\textit{active readers of o;}}$
\State $\textit{Sort all transactions in 'A', and obtain locks on them;}$
\If {t.status\_flag = RED}
\State $PL = partially\_Rollback(t);$
\State $\textit{release all locks;}$
\State $\textit{return PL;}$
\EndIf
\State $\textit{\textbf{for each} object, wo in t's write set, t.WS }$
\State $\textit{\hspace{0.5cm}update wo.value in SM from the local copy of wo.}$
\State $\textit{\hspace{0.5cm}\textbf{for each} transaction rt in wo's active readers' list, }$
\State $\textit{\hspace{1.0cm}add the objects in t.WS to transaction rt's}$
\Statex $\textit{\hspace{1.5cm}conflict objects' list;}$
\State $\textit{\hspace{1.0cm}set transaction rt's status flag to RED;}$
\State $\textit{delete t from actrans;}$
\State $\textit{\textbf{for each} object, ro in t's readers-list}$
\State $\textit{\hspace{0.5cm} delete t from ro's active readers list;}$
\State $\textit{release all locks;}$
\State $\textit{return 0;}$
\EndProcedure

\Procedure{Partially\_Rollback}{$t$}
\State $\textit{identify safest checkpoint - earliest conflicting object;}$
\State $\textit{apply selected checkpoint;}$
\State $\textit{delete t from active reader's list of all objects rolled back}$
\State $\textit{reset status flag to GREEN;}$
\State $\textit{return PL(the new program location);}$
\EndProcedure

\end{algorithmic}
\end{algorithm}

\section{Conflict Opacity}
\label{sec:co}

In this section we describe about \textit{Conflict Opacity} (\co), a subclass of 
Opacity using conflict order (defined in \secref{model}). This subclass is 
similar to conflict serializability of databases whose membership can be tested 
in polynomial time (in fact it is more close to order conflict serializability) 
\cite[Chap 3]{WeiVoss:2002:Morg}. 

\begin{definition}
\label{def:coop1}
A history $H$ is said to be \emph{conflict opaque} or \emph{\coop} if $H$ is 
\valid{} 
and there exists a t-sequential legal history $S$ such that (1) $S$ is 
equivalent to
$\overline{H}$ and (2) $S$ respects $\prec_{H}^{RT}$ and $\prec_{H}^{CO}$. 
\end{definition}

From this definition, we can see that \coop{} is a subset of \opty.

\subsection{Graph characterization of \coopty}
\label{subsec:graph}

Given a history $H$, we construct a \textit{conflict graph}, $\cg{H} =
(V,E)$ as follows:  (1) $V=\txns(H)$, the set of transactions in $H$
(2) an edge $(T_i,T_j)$ is added to $E$ whenever 
$T_i \prec_H^{RT} T_j$ or $T_i \prec_H^{CO} T_j$, i.e., whenever 
$T_i$ precedes $T_j$ in the real-time or conflict order.
\cmnt{
\begin{itemize}
\item[2.1] Real-Time edges: If $T_i$ precedes $T_j$ in $H$
\item[2.2] \co{} edges: If one of the following conditions hold (a) w-w edges: 
$c_i<_{H} c_j$ and $Wset(T_i) \cap Wset(T_j) \neq \emptyset$, (b) w-r edges: 
$c_i<_{H} r_j(x,v)$ and $x \in Wset(T_j)$, or (c) r-w edges: $r_i(x,v)<_{H} c_j$ 
and $x \in Wset(T_j)$. 
\end{itemize}
}

Note, since $\txns(H)=\txns(\overline{H})$ and 
($\prec_H^{RT}\cup\prec_H^{CO})=(\prec_{\overline{H}}^{RT}\cup\prec_{\overline{H
}}^{CO}$),
we have $\cg{H} = \cg{\overline{H}}$. 
In the following lemmas, we show that the graph characterization
indeed helps us verify the membership in \coopty.

\begin{lemma}
\label{lem:co-eq}
Consider two histories $H1$ and $H2$ such that $H1$ is equivalent to
$H2$ and $H1$ respects conflict order of $H2$, i.e., $\prec_{H1}^{CO} \subseteq 
\prec_{H2}^{CO}$. Then, $\prec_{H1}^{CO} = \prec_{H2}^{CO}$. 
\end{lemma}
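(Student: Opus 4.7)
The plan is to establish equality by proving the reverse containment $\prec_{H2}^{CO} \subseteq \prec_{H1}^{CO}$; combined with the given hypothesis this yields the lemma. My first step is to exploit the equivalence of $H1$ and $H2$ (i.e.\ $\evts{H1}=\evts{H2}$) to conclude that for every transaction the sequence of its operations, the values returned by its reads, and its read- and write-sets are identical in the two histories. Consequently, for each pair of transactions $T_k, T_m$ the collection of object-pairs on which they could possibly conflict is fixed, and what may differ between $H1$ and $H2$ is only the \emph{direction} of each such conflict, determined entirely by the relative order of a specific pair of events (two commits, or a commit and a read).

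Next I would pick an arbitrary edge $T_k \prec_{H2}^{CO} T_m$ and case-split on the conflict type witnessing it. In the w-w case we have $c_k <_{H2} c_m$ and some $x \in Wset(T_k) \cap Wset(T_m)$; I would then compare the order of $c_k$ and $c_m$ in $H1$. If $c_k <_{H1} c_m$, the same w-w witness gives $T_k \prec_{H1}^{CO} T_m$ directly. Otherwise $c_m <_{H1} c_k$, producing $T_m \prec_{H1}^{CO} T_k$ via w-w, which by the hypothesis must also lie in $\prec_{H2}^{CO}$. I would rule out the w-w and w-r candidate witnesses for that reverse edge in $H2$ using $c_k <_{H2} c_m$ together with the well-formedness constraint that every read of a transaction precedes its own commit, leaving at most an r-w witness on some other object $y$. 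The w-r and r-w cases for the original edge would be treated in a strictly parallel fashion, again chaining read-before-own-commit against the given orientation in $H2$.

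The main obstacle I expect is exactly this residual r-w witness: when $T_k$ and $T_m$ share both a w-w conflict on $x$ and a latent r-w conflict on some other object $y$ (with $y \in Wset(T_k)$ and $T_m$ reading $y$), no single event-order comparison yields a local contradiction, because the r-w on $y$ can independently supply a reverse edge in both histories. To close this gap, I plan to argue per conflict pair: the hypothesis should force the relative order of the two events defining each individual conflict pair to agree in $H1$ and $H2$, because any mismatch on one pair cascades, via well-formedness, into a flipped $\prec_{H1}^{CO}$ edge whose \emph{only} admissible $\prec_{H2}^{CO}$ witness contradicts an ordering already fixed by another pair in $H2$. Once this per-pair orientation-agreement is established, equality of the full conflict orders follows immediately.
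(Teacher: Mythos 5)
Your overall route is the same as the paper's: prove the reverse containment by taking a witness of $T_k \prec_{H2}^{CO} T_m$, comparing the order of the two witnessing events in $H1$, and using the hypothesis to exclude the flipped orientation. The paper's proof is the short version of your ``easy'' cases: for any pair of conflicting events $(p,q)$ (two commits with intersecting write sets, or a commit and a successful read of a written object), the pair is CO-related in \emph{every} history containing both, with the direction fixed by $<_H$; so either the orientation agrees with $H2$, or it is flipped, and the hypothesis pushes the flipped pair into $\prec_{H2}^{CO}$, contradicting antisymmetry of $<_{H2}$ on that \emph{same pair of events}.

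The genuine gap is in your closing paragraph. You are right that if $\prec^{CO}$ is read as a relation on \emph{transactions} (as the definition literally states), the flipped edge $T_m \prec_{H1}^{CO} T_k$ may be re-witnessed in $H2$ by a different conflict pair, so no local contradiction arises. But your proposed repair --- that the hypothesis still forces per-pair orientation agreement via a cascade --- is false, and no argument at the transaction level can succeed, because the transaction-level statement itself fails. Take $T_1$ with events $r_1(y,0)\, w_1(x,1)\, c_1$ and $T_2$ with events $r_2(x,0)\, w_2(y,2)\, c_2$, and let $H2 = r_1(y,0)\, r_2(x,0)\, w_1(x,1)\, w_2(y,2)\, c_1\, c_2$ while $H1 = r_1(y,0)\, w_1(x,1)\, c_1\, r_2(x,0)\, w_2(y,2)\, c_2$. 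Then $\prec_{H2}^{CO} = \{(T_1,T_2),(T_2,T_1)\}$ (two r-w conflicts, one on $x$ and one on $y$), whereas $\prec_{H1}^{CO} = \{(T_1,T_2)\}$ only; the hypothesis $\prec_{H1}^{CO} \subseteq \prec_{H2}^{CO}$ holds, the orientation of the pair $(r_2(x,0), c_1)$ differs between the two histories, and equality fails. The lemma is true --- and is usable in \lemref{eqv-legal}, whose proof must transfer orderings of \emph{specific} events such as $c_i <_{H2} r_j(x,v)$ --- only when $\prec^{CO}$ is interpreted as a relation on individual conflicting event pairs, which is what the paper's proof implicitly does. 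Under that reading your residual case disappears and the easy dichotomy is the entire proof; under the transaction-level reading your plan cannot be completed.
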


\begin{proof}
Here, we have that $\prec_{H1}^{CO} \subseteq \prec_{H2}^{CO}$. In order to 
prove $\prec_{H1}^{CO} = \prec_{H2}^{CO}$,
we have to show that $\prec_{H2}^{CO} \subseteq \prec_{H1}^{CO}$. We prove this 
using contradiction. 
Consider two events $p,q$ belonging to transaction $T1,T2$ respectively in $H2$ 
such that $(p,q)
\in \prec_{H2}^{CO}$ but $(p,q) \notin \prec_{H1}^{CO}$. Since the events of 
$H2$ and $H1$ are same,
these events are also in $H1$. This implies that the events $p, q$ are also 
related by $CO$ in $H1$.
Thus, we have that either $(p,q) \in \prec_{H1}^{CO}$  or $(q,p) \in 
\prec_{H1}^{CO}$. But from our 
assumption, we get that the former is not possible. Hence, we get that 
$(q,p) \in \prec_{H1}^{CO} \Rightarrow (q,p) \in \prec_{H2}^{CO}$. But we 
already have that $(p,q) \in \prec_{H2}^{CO}$.
This is a contradiction. 
\end{proof}

\begin{lemma}
\label{lem:eqv-legal}
Let $H1$ and $H2$ be equivalent histories such that 
$\prec_{H1}^{CO} = \prec_{H2}^{CO}$. 
Then $H1$ is \legal{} iff $H2$ is \legal. 
\end{lemma}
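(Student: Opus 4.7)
The plan is to prove one direction (say, $H_1$ \legal{} implies $H_2$ \legal) and then invoke the symmetric argument, swapping the roles of $H_1$ and $H_2$, since the hypothesis is itself symmetric. So suppose $H_1$ is \legal{} and pick any successful read $r_k(x,v)$ in $H_2$. Because $H_1$ and $H_2$ have the same set of events, $r_k(x,v)$ also appears in $H_1$, and there is a transaction $T_i$ such that $c_i = \lwrite{r_k(x,v)}{H_1}$ and $w_i(x,v) \in \evts{T_i}$. Since the events of each transaction are identical in the two histories, it suffices to show that $c_i$ is also $\lwrite{r_k(x,v)}{H_2}$.

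The first key step is to transfer the relation $c_i <_{H_1} r_k(x,v)$ over to $H_2$. From $x \in \Wset(T_i)$, $v \neq A$, and $c_i <_{H_1} r_k(x,v)$, the w-r clause gives $T_i \prec_{H_1}^{CO} T_k$. By hypothesis $\prec_{H_1}^{CO} = \prec_{H_2}^{CO}$, so $T_i \prec_{H_2}^{CO} T_k$. Since $T_k$'s only access to $x$ is the read $r_k(x,v)$, neither the w-w nor the r-w clause can witness this edge, so the witness in $H_2$ must again be w-r, yielding $c_i <_{H_2} r_k(x,v)$.

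The second and harder step is to argue that no committed write to $x$ intervenes in $H_2$. Suppose for contradiction there is a transaction $T_j$ with $x \in \Wset(T_j)$ and $c_i <_{H_2} c_j <_{H_2} r_k(x,v)$. Then in $H_2$ we have the w-w edge $T_i \prec_{H_2}^{CO} T_j$ (both commit and both write $x$) and the w-r edge $T_j \prec_{H_2}^{CO} T_k$. By the equality of conflict orders, both edges appear in $\prec_{H_1}^{CO}$. The $T_i$-$T_j$ edge in $H_1$ cannot be witnessed by r-w on either side (neither transaction reads $x$ by assumption on writes), so it is again w-w, forcing $c_i <_{H_1} c_j$; similarly the $T_j$-$T_k$ edge is witnessed by w-r in $H_1$, giving $c_j <_{H_1} r_k(x,v)$. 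Combined, $c_i <_{H_1} c_j <_{H_1} r_k(x,v)$ with $x \in \Wset(T_j)$ contradicts $c_i = \lwrite{r_k(x,v)}{H_1}$. Hence no such $T_j$ exists and $c_i = \lwrite{r_k(x,v)}{H_2}$; since $w_i(x,v) \in \evts{T_i}$ as an event, $r_k(x,v)$ is \legal{} in $H_2$.

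The main obstacle is the second step: one must take care that each conflict edge invoked when moving between $H_1$ and $H_2$ is necessarily of the specific type (w-w or w-r) that actually lets us recover the order on commits and on the read event. Because the only operations involved touch $x$ as a write (from $T_i, T_j$) or as the read $r_k(x,v)$, the clauses of the conflict-order definition pin down the witnessing type uniquely, and the rest of the argument is a direct unfolding of the definitions of \lastw{} and \legality.
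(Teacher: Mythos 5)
Your overall strategy is the same as the paper's: locate the \lastw{} $c_i$ of the read in the legal history, use $\prec_{H1}^{CO}=\prec_{H2}^{CO}$ to carry the orderings $c_i < r_k(x,v)$, $c_j < r_k(x,v)$ and $c_i < c_j$ between the two histories, and contradict the definition of \lastw{}. The paper runs the contradiction on the alleged \lastw{} $c_k$ of the read in $H2$ (deriving $c_k <_{H2} c_i <_{H2} r_j(x,v)$), whereas you show directly that the \lastw{} is unchanged; these are the same argument in two phrasings.

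However, the step you yourself single out as the crux does not hold up under the definition of $\prec^{CO}$ in Section~\ref{sec:model}, which is a relation on \emph{transactions}. Knowing $T_i \prec_{H2}^{CO} T_k$ only says that \emph{some} clause is witnessed by \emph{some} pair of operations on \emph{some} object; nothing forces the witness in $H2$ to involve $x$ or the particular read $r_k(x,v)$, and nothing prevents $T_i$, $T_j$, $T_k$ from accessing other objects (or $T_k$ from also writing $x$ --- well-formedness allows a read-only prefix followed by writes to the same object). So ``the witness must again be w-r, yielding $c_i <_{H2} r_k(x,v)$'' is a non sequitur, and the same objection applies to both inferences in your second step. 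The gap is real at the transaction level: with $T_1$ writing value $1$ to $x,y,z$ and $T_2$ read-only, the histories $H_1 = r_2(y,0)\, w_1(x,1)\, w_1(y,1)\, w_1(z,1)\, c_1\, r_2(x,1)\, r_2(z,1)\, c_2$ and $H_2 = r_2(y,0)\, r_2(x,1)\, w_1(x,1)\, w_1(y,1)\, w_1(z,1)\, c_1\, r_2(z,1)\, c_2$ (each preceded by the initializing $T_0$) are equivalent and both have transaction-level conflict relation $\{(T_0,T_1),(T_0,T_2),(T_1,T_2),(T_2,T_1)\}$, yet $H_1$ is \legal{} while $H_2$ is not (the \lastw{} of $r_2(x,1)$ in $H_2$ is $c_0$, which writes $0$). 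The lemma, and both proofs, are correct only if $\prec^{CO}$ is read as a relation on the conflicting \emph{events} --- which is how the paper implicitly uses it (its proof of \lemref{co-eq} speaks of events $(p,q)\in\prec^{CO}$, and its proof here transfers $c_i <_{H2} r_j(x,v)$ directly); under that reading your transfers are immediate and the witness-pinning analysis is unneeded. Alternatively, your style of argument can be salvaged at the transaction level by additionally assuming $\prec^{CO}$ is antisymmetric (as it is in the acyclic situations where the lemma is actually applied), since a misplaced witness would then create a forbidden two-cycle.
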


\begin{proof}
It is enough to prove the `if' case, and the `only if' case will follow
from symmetry of the argument. 
Suppose that $H1$ is \legal{}. 
By contradiction, assume that $H2$ is not \legal, i.e.,
there is a read \op{} $r_j(x,v)$ (of transaction $T_j$) in $H2$ with
\lastw{} as $c_k$ (of transaction $T_k$) and $T_k$ writes $u \neq v$
to $x$, i.e $w_k(x, u) \in \evts{T_k}$. 
Let $r_j(x,v)$'s \lastw{} in $H1$ be $c_i$ of $T_i$. 
Since $H1$ is legal, $T_i$ writes $v$ to $x$, i.e $w_i(x, v) \in
\evts{T_i}$. 

Since $\evts{H1} = \evts{H2}$, we get that $c_i$ is also in $H2$, and
$c_k$ is also in $H1$. 
As $\prec_{H1}^{CO} = \prec_{H2}^{CO}$,
we get $c_i <_{H2} r_j(x, v)$ and $c_k <_{H1} r_j(x, v)$. 

Since $c_i$ is the \lastw{} of $r_j(x,v)$ in $H1$ we derive that 
$c_k <_{H1} c_i$ and, thus, $c_k <_{H2} c_i <_{H2} r_j(x, v)$.
But this contradicts the assumption that $c_k$ is the \lastw{} of
$r_j(x,v)$ in $H2$.
Hence, $H2$ is legal. 
\cmnt{
from the w-r conflict order we get that 
\begin{equation}
\label{eq:wr}
c_i <_{H2} r_j(x, v)
\end{equation}

Now, we have two cases based on the ordering of $c_k$ and $r_j$ in $H1$:

\begin{itemize}
\item[] Case (1) $r_j(x, v) <_{H1} c_k$: Since $\prec_{H1}^{CO} = 
\prec_{H2}^{CO}$, from r-w conflict order we get that $r_j <_{H2} c_k$. This 
implies that $c_k$ can not be $r_j$'s \lastw{} in $H2$ which is a 
contradiction. 

\item[] Case (2) $c_k <_{H1} r_j(x, v)$: From w-r conflict order equivalence of 
$\prec_{H1}^{CO} = \prec_{H2}^{CO}$, we get that $c_k <_{H2} r_j$. Here, we 
again have two cases based on the ordering of $c_i$ and $c_k$ in $H1$, Case 
(2.1) $c_i <_{H1} c_k$: In this case, we have that $c_i <_{H1} c_k <_{H1} r_j$ 
which implies that $c_i$ is not $r_j$'s \lastw{} in $H2$, a contradiction. Case 
(2.2) $c_k <_{H1} c_i$: From w-w conflict order equivalence of $\prec_{H1}^{CO} 
= \prec_{H2}^{CO}$, we get that $c_k <_{H2} c_i$. Combining this with 
\eqnref{wr}, we have that $c_k <_{H2} c_i <_{H2} r_j$. This implies that $c_k$ 
can not be $r_j$'s \lastw{} in $H2$ which is a contradiction.
\end{itemize}
Thus in all the cases, we get that $c_k$ can not be $r_j$'s
\lastw. This implies that $H2$ is legal. 
}
\end{proof}
From the above lemma we get the following interesting corollary.

\begin{corollary}
\label{cor:coop-legal}
Every \coop{} history $H$ is \legal{} as well.
\end{corollary}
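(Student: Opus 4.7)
The plan is to chain together the two lemmas just proved with the definition of \coopty. Unpacking \defref{coop1}, if $H$ is \coop{} then there is a t-sequential \legal{} history $S$ equivalent to $\overline{H}$ with $\prec_{\overline{H}}^{CO} \subseteq \prec_{S}^{CO}$ (this containment is what it means for $S$ to respect the conflict order of $\overline{H}$). So I have exactly the hypothesis of \lemref{co-eq} with $H_1 := S$ and $H_2 := \overline{H}$, which upgrades the containment to equality $\prec_{S}^{CO} = \prec_{\overline{H}}^{CO}$.

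With equivalence and matching conflict orders established, I would then invoke \lemref{eqv-legal} on the pair $(S,\overline{H})$. Since $S$ is \legal{} by the definition of \coop, the lemma forces $\overline{H}$ to be \legal{} as well.

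Finally I would transfer legality from $\overline{H}$ back to $H$. The completion $\overline{H}$ differs from $H$ only in that an abort event $a_k$ has been appended to each incomplete transaction $T_k$; no new read events are introduced, and the \lastw{} of every existing read is unaffected (appending an abort cannot insert or remove a committing write before any read). Therefore every successful read in $H$ satisfies the legality condition precisely when the same read satisfies it in $\overline{H}$, and we conclude $H$ is \legal.

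The only potentially subtle step is the last one: one has to notice that legality is defined solely in terms of successful reads and their preceding commits, so the purely cosmetic change from $H$ to $\overline{H}$ preserves it in both directions. Everything else is a direct application of the two preceding lemmas, so I do not expect a genuine obstacle.
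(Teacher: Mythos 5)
Your argument is correct and is exactly the chain the paper intends (the paper states this corollary as an immediate consequence of \lemref{eqv-legal}, leaving the details implicit): unwind the definition of \coopty, use \lemref{co-eq} to upgrade the containment of conflict orders to an equality, apply \lemref{eqv-legal} to transfer legality from $S$ to $\overline{H}$, and note that $H$ and $\overline{H}$ have the same successful reads so legality passes back to $H$. The only nit is that to match the literal hypothesis $\prec_{H1}^{CO} \subseteq \prec_{H2}^{CO}$ of \lemref{co-eq} you should take $H1:=\overline{H}$ and $H2:=S$ rather than the reverse, but since the lemma's conclusion is an equality this relabeling is immaterial.
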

Based on the conflict graph construction,
we have the following graph characterization for \coop.

\begin{theorem}
\label{thm:graph}
A \legal{} history $H$ is \coop{} iff $\cg{H}$ is acyclic. 
\end{theorem}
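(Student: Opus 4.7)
The proof is a standard graph-theoretic characterization, so it splits into two directions. Recall that $CG(H) = CG(\overline{H})$, so throughout I can work with $\overline{H}$.

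\textbf{Forward direction ($\Rightarrow$).} Suppose $H$ is co-opaque. Then there is a t-sequential legal history $S$ equivalent to $\overline{H}$ with $\prec_{\overline{H}}^{RT} \subseteq \prec_S^{RT}$ and $\prec_{\overline{H}}^{CO} \subseteq \prec_S^{CO}$. Because $S$ is t-sequential, $\prec_S^{RT}$ is a total order on $\txns(S) = \txns(\overline{H})$, and it must contain $\prec_S^{CO}$ (any two CO-related transactions are ordered by real time in $S$). Hence the edge set $E$ of $CG(\overline{H})$, which is $\prec_{\overline{H}}^{RT} \cup \prec_{\overline{H}}^{CO}$, is contained in the total order $\prec_S^{RT}$ on the vertex set. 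Any graph whose edges embed into a total order is acyclic, so $CG(H)$ is acyclic.

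\textbf{Backward direction ($\Leftarrow$).} Assume $CG(H)$ is acyclic. Since $CG(H) = CG(\overline{H})$ is acyclic over a (possibly infinite, but locally finite) vertex set, extend it to a topological total order $\prec$ on $\txns(\overline{H})$. Construct $S$ by listing the transactions of $\overline{H}$ in the order $\prec$, each time emitting the events of that transaction in the order they appear in $\overline{H}|T$. Then $S$ is t-sequential and equivalent to $\overline{H}$. By construction, $\prec$ extends both $\prec_{\overline{H}}^{RT}$ and $\prec_{\overline{H}}^{CO}$, so $S$ respects $\prec_{\overline{H}}^{RT}$ and $\prec_{\overline{H}}^{CO}$. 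In particular, $\prec_{\overline{H}}^{CO} \subseteq \prec_S^{CO}$, and by \lemref{co-eq} (applied with $H_1 = \overline{H}$, $H_2 = S$) we obtain $\prec_{\overline{H}}^{CO} = \prec_S^{CO}$. Now $H$ is legal by hypothesis; since $\overline{H}$ differs from $H$ only by appending abort events (which never carry a successful read), $\overline{H}$ is also legal. Applying \lemref{eqv-legal} to the equivalent histories $\overline{H}$ and $S$ with matching conflict orders yields that $S$ is legal. Legality implies validity, so $H$ is valid, and $S$ witnesses co-opacity of $H$.

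\textbf{Main obstacle.} The only non-routine step is the backward direction: turning a topological order on $CG(H)$ into a \emph{legal} t-sequential history. Constructing $S$ that is equivalent and respects the real-time and conflict orders is mechanical, but establishing legality requires invoking exactly the machinery of \lemref{co-eq} and \lemref{eqv-legal}, and being careful that the containment $\prec_{\overline{H}}^{CO} \subseteq \prec_S^{CO}$ is delivered by the construction so that \lemref{co-eq} can promote it to equality. Once that equality is in hand, \lemref{eqv-legal} transfers legality from $\overline{H}$ to $S$ and the proof closes.
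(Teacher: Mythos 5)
Your proof is correct and follows essentially the same route as the paper: the forward direction embeds $\cg{\overline{H}}$ into the total order induced by the t-sequential witness $S$, and the backward direction topologically sorts $\cg{\overline{H}}$, builds $S$, and transfers legality from $\overline{H}$ to $S$ via \lemref{eqv-legal}. Your explicit appeal to \lemref{co-eq} to upgrade $\prec_{\overline{H}}^{CO} \subseteq \prec_S^{CO}$ to equality is a slightly cleaner justification of a step the paper asserts directly, but it is not a different argument.
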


\begin{proof}

\noindent
\textit{(Only if)} If $H$ is \coop{} and legal, then
$\cg{H}$ is acyclic:
Since $H$ is \coop{}, there exists a legal t-sequential
history $S$ equivalent to $\overline{H}$ and $S$ respects
$\prec_{H}^{RT}$ and $\prec_{H}^{CO}$. Thus from the conflict graph
construction we have that $\cg{\overline{H}} (= \cg{H})$ is a sub
graph of $\cg{S}$. Since $S$ is sequential, it can be inferred that
$\cg{S}$ is acyclic. Any sub graph of an acyclic graph is also
acyclic. Hence $\cg{H}$ is also acyclic. 

\vspace{1mm}
\noindent
\textit{(if)} If $H$ is \legal{} and $\cg{H}$ is acyclic then $H$ is
\coop: 
Suppose that $\cg{H}=\cg{\overline{H}}$ is acyclic. Thus we can
perform a topological sort on the vertices of the graph and obtain a
sequential order. Using this order, we can obtain a sequential
schedule $S$ that is equivalent to $\overline{H}$.
Moreover, by construction, $S$ respects $\prec_{H}^{RT} =
\prec_{\overline{H}}^{RT}$ 
and $\prec_{H}^{CO} = \prec_{\overline{H}}^{CO}$. 

Since every two events related by the conflict relation (w-w, r-w, or
w-r)in $S$ are also related by $\prec_{\overline{H}}^{CO}$, we obtain 
$\prec_{S}^{CO} = \prec_{\overline{H}}^{CO}$. 
Since $H$ is legal, $\overline{H}$ is also legal. 
Combining this with \lemref{eqv-legal}, we get that $S$ is also
\legal. 
This satisfies all the conditions necessary for $H$ to be \coop. 
\end{proof}

\subsection{Proof of Opacity for CaPR+ Algorithm}
\label{sec:proof}

In this section, we will describe some of the properties of $CaPR^{+}$ algorithm and then prove that it 
satisfies opacity. In our implementation, only the read and \tryc{} \op{s} 
access the memory. Hence, we call these \op{s} as \memop{s}. The main idea 
behind our algorithm is as follows: Consider a live transaction $T_i$ which has 
read a value $u$ for \tobj{} $x$. Suppose a transaction $T_j$ writes a value $v$ 
to \tobj{} $x$ and commits. When $T_i$ executes the next \memop{} (after the 
$c_j$), $T_i$ is rolled back to the step before the read of $x$. We denote that 
$T_j$ has \textit{invalidated} the $T_i$'s read of $x$. Transaction $T_i$ then 
reads $x$ again.

The following example illustrates this idea. Consider the history $H1: r_1(x, 0) 
r_1(y, 0) r_2(x, 0) r_1(z, 0) \\
w_2(y, 5) c_2 w_1(x,5)$. 
In this history, 
when $T_1$ performs any other \memop{} such as a read after $C-$, it will then 
be rolled back to the step $r_1(y)$ causing it to read $y$ again. 

\begin{figure}[h]
\includegraphics[width=8cm]{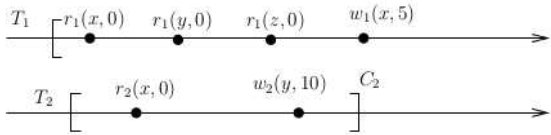}
\caption{Pictorial representation of a History $H1$}
 \label{fig:ex1}
\end{figure}

Thus as explained, in our algorithm, when a transaction's read is invalidated, 
it does not abort but rather gets rolled back. In the worst case, it could get 
rolled back to the first step of the transaction which is equivalent to the 
transaction being aborted and restarted. Thus with this algorithm, a history 
will consist only of incomplete (live) and committed transactions. 

To precisely capture happenings of the algorithm and to make it consistent with 
the model we discussed so far, we modify the representation of the transactions 
that are rolled back. Consider a transaction $T_i$ which has read $x$. Suppose 
another transaction $T_j$ that writes to $x$ and then commits. Thus, when $T_i$ 
performs its next \memop, say $m_i$ (which could either be a read or commit 
\op), it will be rolled back. We capture this rollback \op{} in the history as 
two transactions: $T_{i.1}$ and $T_{i.2}$.

Here, $T_{i.1}$ represents all the successful \op{s} of transaction $T_i$ until 
it executed the \memop{} $m_i$ which caused it to roll back (but not including 
that $m_i$). Transaction $T_{i.1}$ is then terminated by an abort \op{} 
$a_{i.1}$. Then, after transaction $T_j$ has committed transaction $T_{i.2}$ 
begins. Unlike $T_{i.1}$ it is incomplete. It also consists of all same \op{s} 
of $T_{i.1}$ until the read on $x$. $T_{i.2}$ reads the latest value  of the 
\tobj{} $x$ again since it has been invalidated by $T_j$. It then executes 
future steps which could depend on the read of $x$. With this modification, the 
history consists of committed, incomplete as well as aborted transactions (as 
discussed in the model).

In reality, the transaction $T_i$ could be rolled back multiple times, say $n$. 
Then the history $H$ would contain events from transactions $T_{i.1}, T_{i.2}, 
T_{i.3} .... T_{i.n}$. But it must be noted that all the invocations of $T_i$ 
are related by real-time order. Thus, we have that $T_{i.1} \prec^{RT}_{H} 
T_{i.2} \prec^{RT}_{H} T_{i.3} .... \prec^{RT}_{H} T_{i.n}$

With this change in the model, the history $H1$ is represented as follows, $H2: 
r_{1.1}(x, 0) r_{1.1}(y, 0) r_{2.1}(x, 0) \\
r_{1.1}(z, 0) w_{2.1}(y, 5) c_{2.1} w_1(x,5) a_{1.1} r_{1.2}(x, 0) r_{1.2}(y, 
10)$. 

For simplicity, from now on in histories, we will denote a transaction with 
greek letter subscript such as $\alpha, \beta, \gamma$ etc regardless of whether 
it is invoked for the first time or has been rolled back. Thus in our 
representation, transaction $T_{i.1}, T_{i.2}$ could be denoted as $T_\alpha, 
T_\gamma$ respectively.


\begin{figure*}
\includegraphics[width=16cm]{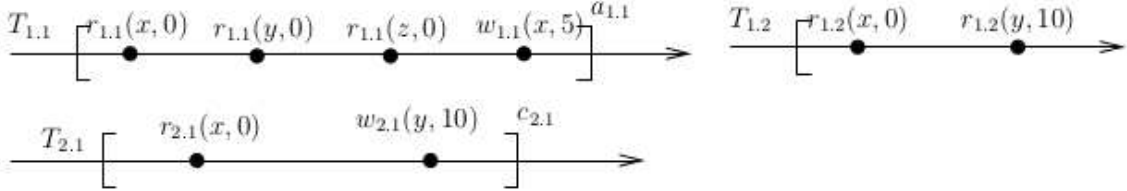}
\caption{Pictorial representation of the modified History $H2$}
 \label{fig:ex2}
\end{figure*}

We will now prove the correctness of this algorithm. We start by describing a 
property that captures the basic idea behind the working of the algorithm. 

\begin{property}
\label{prop:main-algo}
Consider a transaction $T_i$ that reads \tobj{} $x$. Suppose another transaction 
$T_j$ writes to $x$ and then commits. In this case, the next \memop{} (read or 
\tryc) executed by $T_i$ after $c_j$ returns abort (since the read of $x$ by 
$T_i$ has been invalidated). 
\end{property}

For a transaction $T_i$, we define the notion of \textit{successful final \memop 
(\sfm)}. As the name suggests, it is the last successfully executed \memop{} of 
$T_i$. If $T_i$ is committed, then $\sfm_i = c_i$. If $T_i$ is aborted, then 
$\sfm_i$ is the last \memop, in this case a read \op,  that returned $ok$ before 
being aborted. 

For proving the correctness, we use the graph characterization of \coopty{} 
described in \secref{co}. 

Consider a history $H_{capr}$ generated by the CaPR 
algorithm. Let $CG(H_{capr})$ be the conflict graph of $H_{capr}$. We show that this graph denoted, $g_{capr}$, is acyclic.

\begin{lemma}
\label{lem:path-sfm}
Consider a path $p$ in $g_{capr}$ abstracted as: $T_{\alpha 1} 
\rightarrow T_{\alpha 2} \rightarrow .... \rightarrow T_{\alpha k}$. Then, 
$\sfm_{\alpha 1} <_{H_{capr}} \sfm_{\alpha 2} <_{H_{capr}} .... <_{H_{capr}} 
\sfm_{\alpha k}$. 
\end{lemma}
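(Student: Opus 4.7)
The plan is to proceed by induction on the path length $k$. For $k=1$ (a single edge $T_\alpha \rightarrow T_\beta$), I will show $\sfm_\alpha <_{H_{capr}} \sfm_\beta$ by case analysis on the type of edge, using the definitions of real-time and conflict order, the definition of $\sfm$, and (crucially for one case) \propref{main-algo}. The inductive step is then immediate: given the inductive hypothesis on the prefix path $T_{\alpha 1} \rightarrow \cdots \rightarrow T_{\alpha(k-1)}$ and the base case applied to the final edge $T_{\alpha(k-1)} \rightarrow T_{\alpha k}$, transitivity of $<_{H_{capr}}$ closes the argument.

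For the base case, I distinguish four subcases according to the edge type in $CG(H_{capr})$.
(i) \emph{Real-time edge} $T_\alpha \prec^{RT}_{H_{capr}} T_\beta$: then $T_\alpha$ is complete, so its last event (necessarily a commit or abort) precedes the first event of $T_\beta$; since $\sfm_\alpha$ is at or before the last event of $T_\alpha$ and $\sfm_\beta$ is at or after the first event of $T_\beta$, the claim follows.
(ii) \emph{w-w edge}: both $T_\alpha,T_\beta$ commit, $\sfm_\alpha=c_\alpha$, $\sfm_\beta=c_\beta$, and the edge itself gives $c_\alpha <_{H_{capr}} c_\beta$.
(iii) \emph{w-r edge}: $T_\alpha$ commits so $\sfm_\alpha=c_\alpha$, and $r_\beta(x,v)$ is a successful read of $T_\beta$; since $\sfm_\beta$ is the last successful \memop{} of $T_\beta$, we have $\sfm_\alpha=c_\alpha<_{H_{capr}} r_\beta(x,v)\le_{H_{capr}}\sfm_\beta$.

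The genuinely interesting case, and the place I expect the only subtlety, is the r-w edge case (iv): $r_\alpha(x,v)<_{H_{capr}} c_\beta$ with $x\in Wset(T_\beta)$ and $v\ne A$, but here $T_\alpha$ need not be committed, so $\sfm_\alpha$ could in principle lie anywhere in $T_\alpha$. The key observation is that \propref{main-algo} applies directly: $T_\alpha$ has successfully read $x$, and $T_\beta$ writes $x$ and commits, so the very next \memop{} of $T_\alpha$ after $c_\beta$ must return abort. In particular, no \memop{} of $T_\alpha$ executed after $c_\beta$ is successful, hence every successful \memop{} of $T_\alpha$ (and in particular $\sfm_\alpha$) precedes $c_\beta=\sfm_\beta$ in $H_{capr}$. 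Note this also relies on the bookkeeping convention introduced for the modified history model: each rolled-back attempt is recorded as a distinct transaction with its own greek-letter subscript, so $T_\alpha$ here refers to a single attempt and cannot ``survive'' the invalidation.

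Once the base case is established in all four subcases, the inductive step merely chains the inequalities: assuming $\sfm_{\alpha 1}<_{H_{capr}}\cdots<_{H_{capr}}\sfm_{\alpha(k-1)}$ and applying the base case to the edge $T_{\alpha(k-1)}\rightarrow T_{\alpha k}$ yields $\sfm_{\alpha(k-1)}<_{H_{capr}}\sfm_{\alpha k}$, whence the full chain of inequalities. The main obstacle throughout is purely the r-w case, and handling it cleanly amounts to citing \propref{main-algo} together with the definition of $\sfm$ for an aborted transaction.
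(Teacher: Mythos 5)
Your proposal is correct and follows essentially the same route as the paper's own proof: induction on the path length, with the base case handled by the same four-way case analysis on edge type and the r-w case resolved by invoking \propref{main-algo} together with the definition of $\sfm$ for a non-committed transaction. The only cosmetic difference is that the paper splits the r-w case into two subcases (whether $T_{\alpha 1}$ terminates before or after $c_{\alpha 2}$), whereas you give a single unified argument that subsumes both.
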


\begin{proof}
We prove this using induction on k. 

\vspace{1mm}
\noindent
\textit{Base Case, $k=2$.} In this case the path consists of only one edge 
between transactions $T_{\alpha 1}$ and $T_{\alpha 2}$. Let us analyse the 
various types of edges possible:

\begin{itemize}
\item \textit{real-time edge:} This edge represents real-time. In this case 
$T_{\alpha 1} \prec_{H_{capr}}^{RT} T_{\alpha 2}$. Hence, we have that 
$\sfm_{\alpha 1} <_{H_{capr}} \sfm_{\alpha 2}$.

\item \textit{w-w edge:} This edge represents w-w order conflict. In this case 
both transactions $T_{\alpha 1}$ and $T_{\alpha 2}$ are committed and 
$\sfm_{\alpha 1} = c_{\alpha 1}$ and $\sfm_{\alpha 2} = c_{\alpha 2}$. Thus, 
from the definition of this conflict, we get that $\sfm_{\alpha 1} <_{H_{capr}} 
\sfm_{\alpha 2}$.

\item \textit{w-r edge:} This edge represents w-r order conflict. In this case, 
$c_{\alpha 1} <_{H_{capr}} r_{\alpha 2}(x, v)$ ($v \neq A$). For transaction 
$T_{\alpha 1}$, $\sfm_{\alpha 1} = c_{\alpha 1}$. For transaction $T_{\alpha 
2}$, either $r_{\alpha 2} <_{H_{capr}} \sfm_{\alpha 2}$ or $r_{\alpha 2} = 
\sfm_{\alpha 2}$. Thus in either case, we get that $\sfm_{\alpha 1} <_{H_{capr}} 
\sfm_{\alpha 2}$. 

\item \textit{r-w edge:} This edge represents r-w order conflict. In this case, 
$r_{\alpha 1}(x, v) <_{H_{capr}} c_{\alpha 2} $ (where $v \neq A$). Thus 
$\sfm_{\alpha 2} = c_{\alpha 2}$. Here, we again have two cases: (a) $T_{\alpha 
1}$ terminates before $T_{\alpha 2}$. In this case, it is clear that 
$\sfm_{\alpha 1} <_{H_{capr}} \sfm_{\alpha 2}$. (b) $T_{\alpha 1}$ terminates 
after $T_{\alpha 2}$ commits. The working of the algorithm is such that, as 
observed in \propref{main-algo}, the next \memop{} executed by $T_{\alpha 1}$ 
after the commit \op{} $c_{\alpha 2}$ returns abort. From this, we get that the 
last successful \memop{} executed by $T_{\alpha 1}$ must have executed before 
$c_{\alpha 2}$. Hence, we get that $\sfm_{\alpha 1} <_{H_{capr}} \sfm_{\alpha 
2}$.
\end{itemize}
Thus in all the cases, the base case holds.

\vspace{1mm}
\noindent
\textit{Induction Case, $k=n > 2$.} In this case the path consists of series of 
edges starting from transactions $T_{\alpha 1}$ and ending at $T_{\alpha n}$. 
From our induction hypothesis, we know that it is true for $k = n-1$. Thus, we 
have that $\sfm_{\alpha 1} <_{H_{capr}} \sfm_{\alpha (n-1)}$. Now consider the 
transactions $T_{\alpha (n-1)}, T_{\alpha n}$ which have an edge between them. 
Using the arguments similar to the base case,  we can prove that $\sfm_{\alpha 
(n-1)} <_{H_{capr}} \sfm_{\alpha n}$. Thus, we have that $\sfm_{\alpha 1} 
<_{H_{capr}} \sfm_{\alpha n}$.

In all the cases, we have that $\sfm_{\alpha 1} <_{H_{capr}} \sfm_{\alpha n}$. 
Hence, proved. 
\end{proof}

Using \lemref{path-sfm}, we show that $g_{capr}$ is acyclic.

\begin{lemma}
\label{lem:acyclic}
Graph, $g_{capr}$ is acyclic.
\end{lemma}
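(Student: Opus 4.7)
The plan is a short reduction to \lemref{path-sfm}. I will assume, for contradiction, that $g_{capr}$ contains a directed cycle, and derive an impossible strict inequality on $\sfm$ events.

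Concretely, suppose such a cycle exists. Since every edge of $g_{capr}$ relates two distinct transactions (the real-time order is irreflexive, and each of the conflict subcases w-w, w-r and r-w forces the two endpoints to be distinct by well-formedness of $H_{capr}$), the cycle has length at least two and can be written as $T_{\alpha 1} \rightarrow T_{\alpha 2} \rightarrow \ldots \rightarrow T_{\alpha k} \rightarrow T_{\alpha 1}$ for some $k \geq 2$. I would apply \lemref{path-sfm} to the sub-path $T_{\alpha 1} \rightarrow T_{\alpha 2} \rightarrow \ldots \rightarrow T_{\alpha k}$ to obtain $\sfm_{\alpha 1} <_{H_{capr}} \sfm_{\alpha k}$, and the base case of the same lemma, applied to the closing edge $T_{\alpha k} \rightarrow T_{\alpha 1}$, to obtain $\sfm_{\alpha k} <_{H_{capr}} \sfm_{\alpha 1}$. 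Transitivity of $<_{H_{capr}}$ then gives $\sfm_{\alpha 1} <_{H_{capr}} \sfm_{\alpha 1}$, contradicting the irreflexivity of the event order.

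I do not anticipate any real obstacle at this stage: all of the subtle content --- in particular, the r-w case in \lemref{path-sfm} that leans on \propref{main-algo} to argue that a read invalidated by a concurrent commit has its $\sfm$ strictly before that commit --- has already been absorbed into \lemref{path-sfm}. Given that lemma, acyclicity is essentially the observation that a strict transitive order on events cannot contain a finite cycle, so the entire argument reduces to the two-line contradiction above. The only minor care needed is the preliminary remark that $g_{capr}$ has no self-loops, which is immediate from the definitions of the four edge types.
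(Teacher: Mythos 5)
Your proposal is correct and follows essentially the same route as the paper: assume a cycle, invoke Lemma~\ref{lem:path-sfm} around it to obtain $\sfm_{\alpha 1} <_{H_{capr}} \sfm_{\alpha 1}$, and conclude by irreflexivity. Your version is marginally more careful in splitting off the closing edge and noting the absence of self-loops, but the argument is the same.
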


\begin{proof}
We prove this by contradiction. Suppose that $g_{capr}$ is cyclic. Then there is 
a cycle going from $T_{\alpha 1} \rightarrow T_{\alpha 2} \rightarrow .... 
\rightarrow T_{\alpha k} \rightarrow T_{\alpha 1}$. 

From \lemref{path-sfm}, we get that $\sfm_{\alpha 1} \rightarrow \sfm_{\alpha 2} 
\rightarrow .... \rightarrow \sfm_{\alpha k} \rightarrow \sfm_{\alpha 1}$ which 
implies that $\sfm_{\alpha 1} \rightarrow \sfm_{\alpha 1}$. Hence, the 
contradiction. 
\end{proof}

\begin{theorem}
\label{thm:graph}
All histories generated by $CaPR^+$ are co-opaque and hence, $Capr^+$ satisfies the property of opacity. 
\end{theorem}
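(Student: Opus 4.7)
The plan is to assemble the final result directly from the machinery already developed. By the graph characterization (the earlier \thmref{graph}), a \legal{} history is \coop{} iff its conflict graph is acyclic; \lemref{acyclic} already supplies acyclicity of $g_{capr}$; and \corref{coop-legal} together with the definition tells us \coop{} implies \opty. So the only genuinely new content I need to supply is (i) that every history $H_{capr}$ produced by $CaPR^{+}$ is \valid, and (ii) that it is \legal. Once these are in place, the theorem follows immediately: $H_{capr}$ is \legal, $\cg{H_{capr}}$ is acyclic by \lemref{acyclic}, hence $H_{capr}$ is \coop{} by \thmref{graph}, and therefore \opq{} since \coopty{} is a subclass of \opty.

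For validity, I would argue from the algorithm structure. A successful read $r_{\alpha}(x,v)$ in $H_{capr}$ either (a) reads from $T_\alpha$'s own LDB/SOS, in which case $v$ was previously written by $T_\alpha$ itself to its workspace after first obtaining $v$ from shared memory, or (b) reads from shared memory while holding the locks on $x$ and on $T_\alpha$'s actrans entry. In case (b) the algorithm first inspects the status flag, and only if it is not \textsc{red} does the read return a value; that value is the one currently stored in SM, which by construction was installed by the $\tryc$ of some committed transaction $T_j$ (or by the initializing $T_0$) prior to the read. Tracking case (a) back to its originating shared read reduces it to case (b), so in all successful cases there is a committing writer of $v$ that precedes the read in $H_{capr}$, giving validity.

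Legality requires more care, and this is the main obstacle. I need to argue that the committed writer $T_j$ used to witness validity is in fact the \lastw{} of $r_\alpha(x,v)$ in $H_{capr}$. Suppose for contradiction that some other transaction $T_k$ writes to $x$ and commits with $c_j <_{H_{capr}} c_k <_{H_{capr}} r_\alpha(x,v)$. When $c_k$ executes, the commit procedure walks $x$'s active readers list, adds $x$ to the conflict objects of every reader, and flips their status to \textsc{red}. By the algorithm, $T_\alpha$ was inserted into $x$'s readers list at the time it populated its SOS entry for $x$, and it is not removed from that list until it itself commits or is rolled back. Thus at the moment of $c_k$, if $T_\alpha$'s SOS/LDB still records a stale value of $x$, then $T_\alpha$ is in $x$'s readers list and its status is set to \textsc{red} inside the critical section of $c_k$. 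Now by \propref{main-algo}, the next \memop{} of $T_\alpha$ returns abort, so the algorithm's model-level splitting described in the paper replaces $T_\alpha$ with $T_{\alpha.1}$ (aborted) and a fresh $T_{\alpha.2}$ begun after $c_k$. Consequently a successful read $r_\alpha(x,v)$ lying after $c_k$ cannot occur inside the pre-rollback segment $T_{\alpha.1}$; and in the post-rollback segment $T_{\alpha.2}$ the stale SOS entry for $x$ has been discarded by \textsc{Partially\_Rollback}, forcing the next read of $x$ to enter the shared-memory branch under the lock and retrieve the value that $c_k$ (or a later committer) installed. In either case the value returned cannot be $v$ unless $T_k = T_j$, contradicting the choice of $T_k$. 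Hence no such $T_k$ exists, so the \lastw{} of $r_\alpha(x,v)$ is indeed $c_j$ and $r_\alpha(x,v)$ is legal.

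With validity and legality of $H_{capr}$ in hand, I apply \lemref{acyclic} to get that $\cg{H_{capr}}$ is acyclic, then \thmref{graph} to conclude $H_{capr}$ is \coop, and finally the observation following \defref{coop1} (that \coopty{} $\subseteq$ \opty) to conclude that $H_{capr}$ is opaque. Since $H_{capr}$ was an arbitrary history generated by $CaPR^{+}$, the implementation satisfies opacity. The trickiest step is the legality argument above, because it requires a careful interleaving argument tying the commit-time writer-notification loop, the readers-list bookkeeping, and the rollback-induced transaction splitting together; the rest is essentially invocation of prior lemmas.
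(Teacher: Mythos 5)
Your proof takes the same overall route as the paper's: the paper's entire argument for this theorem is the single sentence that it ``follows from Theorem 5 and Lemma 8,'' i.e., from the graph characterization of \coopty{} together with \lemref{acyclic}. What you add---and what the paper silently omits---is the verification of the hypothesis of Theorem 5: that theorem only asserts that a \emph{\legal} history is \coop{} iff its conflict graph is acyclic, so acyclicity of $g_{capr}$ alone does not yield co-opacity, and the paper never establishes that $H_{capr}$ is \legal{} or \valid. Your argument that every successful read returns the value installed by its \lastw{} (via the readers-list bookkeeping, the status-flag check under locks, \propref{main-algo}, and the transaction-splitting convention for rollbacks) is precisely the missing step, and it is the genuinely hard part, as you note. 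One caveat: your legality argument relies on \propref{main-algo} applying to \emph{every} \memop{} executed after the invalidating commit, whereas the algorithm as written only consults the status flag in the shared-memory branch of ReadTx and in commitTx, so a read satisfied from the local SOS bypasses the check; you inherit this gap from the paper itself, which states \propref{main-algo} as an unproved property of the algorithm. Modulo that shared assumption, your proof is correct and strictly more complete than the one in the paper.
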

\begin{proof}
Proof follows from Theorem 5 and Lemma 8.
\end{proof}


\section{Conclusion}
In this paper, we have described $CaPR^{+}$, an enhanced $CaPR$ algorithm and proved its opacity. We have also implemented the same and tested its 
performance. While it shows good performance for transactions that take time, its performance for small transactions
shows overhead which is obvious.  
A thorough comparison with STAMP benchmarks with varying transactions has been done and shows good results. 
This will be reported elsewhere. Further, we have been working on several optimizations
like integrating both partial rollback and abort mechanisms in the same implementation to exploit the benefits of both mechanisms, 
and also integrate with it the contention management. 
\end{document}